\pgfplotsset{compat=1.14}
\definecolor{grays1}{RGB}{70, 70, 70}
\definecolor{blues1}{RGB}{198, 219, 239}
\definecolor{blues2}{RGB}{158, 202, 225}
\definecolor{blues3}{RGB}{107, 174, 214}
\definecolor{blues4}{RGB}{49, 130, 189}
\definecolor{blues5}{RGB}{8, 81, 156}
\definecolor{reds1}{RGB}{239,198, 198}
\definecolor{reds2}{RGB}{225,158, 158}
\definecolor{reds3}{RGB}{214,107, 107}
\definecolor{reds4}{RGB}{189,70,70}
\definecolor{reds5}{RGB}{156, 8, 81}
\definecolor{greens1}{RGB}{198, 239,198}
\definecolor{greens2}{RGB}{158, 225, 158}
\definecolor{greens3}{RGB}{107, 214, 130}
\definecolor{greens4}{RGB}{50, 189, 70}
\definecolor{greens5}{RGB}{8, 156, 51}
\definecolor{trans1}{RGB}{49, 130, 189}
\definecolor{trans2}{RGB}{77,118,165}
\definecolor{trans3}{RGB}{105, 106,141}
\definecolor{trans4}{RGB}{133, 94, 117}
\definecolor{trans5}{RGB}{161, 82, 93}
\definecolor{trans6}{RGB}{189,70,69}
\newcommand{\sourcetime}{\ensuremath{t_{s}}}
\newcommand{\sourceposition}{\ensuremath{x_{s}}}
\newcommand{\peakfreq}{\ensuremath{\omega_{p}}}
\newcommand{\psiricker}{\ensuremath{\psi^{r}}}
\newcommand{\phiricker}{\ensuremath{\varphi^{r}}}
\newcommand{\psirickeromega}{\ensuremath{\psi^{r}_{\omega_{p}}}}
\newcommand{\phirickeromega}{\ensuremath{\varphi^{r}_{\omega_{p}}}}
\newcommand{\psirickeromegadef}[1]{\ensuremath{\psi^{r}_{\omega_{p}=#1}}}
\newcommand{\phirickeromegadef}[1]{\ensuremath{\varphi^{r}_{\omega_{p}=#1}}}
\newcommand{\bandwidth}{\ensuremath{\Delta^{\varphi^r}_{n,\peakfreq}}}
\newcommand{\bandwidthnomega}[2]{\ensuremath{\Delta^{\varphi^r}_{#1,#2}}}
\newcommand{\stdgauss}{\ensuremath{\sigma_{\peakfreq}}}
\newcommand{\stdgaussomega}[1]{\ensuremath{\sigma_{#1}}}
\title{Transferring the inhomogeneous wave equation into a homogeneous equation}% Force line breaks with \\
\author{ \href{https://orcid.org/0000-0002-4411-3635}{Marcos V. C. Henriques} \\
	Departamento de Ci\^encias Exatas e Tecnologia da Informa\c{c}\~ao\\
	Universidade Federal Rural do Semi-\'Arido\\
	Angicos, Brazil \\
	\texttt{viniciuscandido@ufersa.edu.br} \\
}
\begin{document}

\maketitle
    
\begin{abstract}
\iffalse
Many environmental and technological situations involve wave propagation triggered by an impulsive point source of short duration. This type of source can be mathematically modeled as a combination of a time-domain wavelet and a spatial impulse function, especially in seismic exploration. The introduction of such external sources leads to the
\fi
The inhomogeneous wave equation, triggered by point sources, forms the basis for the most modern computational techniques of seismic inversion. In this work, we propose to transfer the inhomogeneous wave equation into a homogeneous equation. We show that one can suppress the wavelet-based source term from the inhomogeneous equation in favour of setting the initial time derivative condition of the wavefield as a scaled wavelet of the same type. With the homogeneous wave equation, one can slightly reduce the computational cost of numerical modeling.
\end{abstract}

\keywords{wave equation \and numerical modelling \and wavelets}

\setcounter{subsection}{9}

\section{\label{sec:level1}Introduction}

Many physical phenomena involve the propagation of wavefronts generated by well-localized sources of very short duration. Examples can be found in acoustics \cite{habets2006room,allen1979image,ward2001reproduction}, electromagnetism \cite{tiwana2017point,vlaar1966field} and seismology \cite{vidale1988elastic,frankel1993three}. Modeling such sources is of particular interest in seismic exploration, where the seismic energy is provided in a controlled manner by a high-power short pulse on the surface, triggered by specialized devices such as thumpers or air guns \cite{evans1997handbook,meunier2011seismic}. In this case, the wavefronts are responsible for ``illuminating'' the geological layers interfaces from the subsurface. 

The simulation of wave propagation in continuous media, using a numerical solution of the wave equation, is the basis of modern techniques for solving seismic inversion problems. The inversion is a processing stage aimed at directly describing the properties of the rocks. Advanced inversion techniques such as Reverse Time Migration (RTM) \cite{baysal1983reverse,mcmechan1989review} and Full Waveform Inversion (FWI) \cite{virieux2009overview} massively make use of numerical solutions of the wave equation. The extremely high computational cost of these procedures causes, in most cases, the disregard of elastic effects.

In seismic modeling, an impulsive point source (IPS) is usually represented by a combination of a compact time function, such as a wavelet, and a spatial impulse function, such as the Dirac delta function \cite{cohen1979velocity,alford1974accuracy}. The generated wavefronts  are characteristic of the chosen wavelet. The wave energy is quickly added to the system as the time wavelet is in action. 

Mathematically speaking, setting up sources is not the only way to generate wavefronts. As introduced by d'Alembert in his notorious analytical solution of the homogeneous wave equation \cite{drecherches}, initial conditions of the wavefield can trigger the wave propagation. If there are no sources, and therefore no external forces, initial conditions make the system to have, at the initial time, the amount of energy that is going to be propagated. So, depending on the nature of the problem, to model the wave propagation, one could suppress the IPS term in favor of using only the initial conditions. That would be useful, for example, to slightly reduce the computational cost of a simulation.

The intent of this work is to show that, for the one-dimensional case, the waveforms generated by IPS's modeled with time wavelets can be reproduced by setting the initial time derivative of the wavefield as a scaled wavelet of the same type (section \ref{sec:wave-equation}). This 1D approach can be used as a base for the same methodology in realistic 2D and 3D cases.

\iffalse
The paper is organized as follows: in section \ref{sec:wave-equation}
we review the inhomogeneous 1D wave equation and discuss its analytical
solutions in two situations. The first  situation is when one models
an IPS with a wavelet in time ($S$ solution). The second situation
is when one sets the initial condition for the time derivative of
the wavefield by using a particularly scaled wavelet ($G$ solution). We also show that the $S$ and $G$ solutions tend to form the same
waveforms as the time passes. In Section \ref{sec:wave-equation}

We apply the Ricker wavelet to the solutions of the wave equation (section \ref{sec:ricker-wavelet}). We derive the energy expressions related to the solutions based on the Ricker wavelet, and show that they are consistent with their essence (section \ref{sec:Energy-Analysis}). We also show how to discretize the domain and numerically solve the wave equation by using a finite difference scheme, and turn the inhomogeneous equation into an homogeneous
one (section \ref{sec:Numerical-Implementation}).
\fi

\section{The wave equation\label{sec:wave-equation}}

The inhomogeneous wave equation in one dimension can be written as
\begin{equation}
c^{2}\partial_{x}^{2}u\left(x,t\right)-\partial_{t}^{2}u\left(x,t\right)=s\left(x,t\right),\;\;\;\;\;x\in\mathbb{R},\,t\in\mathbb{R}_{\geqslant0}\label{eq:waveequation}
\end{equation}
where the time-dependent scalar function $u=u\left(x,t\right)$
represents a physical quantity, depending on the problem in which the equation is applied, that produces waves that propagate with velocity $c$. For example, it can represent the transverse displacement of a string,
the electric or the magnetic field, or seismic waves that travel through
the Earth. Accordingly, the source term $s$, whose presence characterizes
the equation \eqref{eq:waveequation} as an inhomogeneous wave equation,
can represent an external force applied to a stretched string, point
sources of electromagnetic waves such as time-varying charge densities, 
or a seismic source. 

Without the source $s$, we have the homogeneous wave equation:
\begin{equation}
c^{2}\partial_{x}^{2}u\left(x,t\right)-\partial_{t}^{2}u\left(x,t\right)=0,\;\;\;\;\;x\in\mathbb{R},\,t\in\mathbb{R}_{\geqslant0}\label{eq:homowaveequation}
\end{equation}

We will refer to $u$ as the wavefield. The system of measurement
is irrelevant to the discussion of this work, but we choose the International System of Units (SI) for all the fundamental physical quantities.
However, the unit of the wavefield $u$ depends on the nature of the
problem. So, throughout the text, we will omit the unit when referring
to a value representing the wavefield in a specific position and time.

Let us denote the initial conditions by:
\begin{equation}
\begin{cases}
\begin{array}{lll}
u\left(x,0\right) & =f\left(x\right), & \;\;\;\;\;x\in\mathbb{R}\\
\partial_{t}u\left(x,0\right) & =g\left(x\right), & \;\;\;\;\;x\in\mathbb{R}
\end{array}\end{cases}\label{eq:initialconditions}
\end{equation}
where $f$ and $g$ are smooth functions. The expressions \eqref{eq:waveequation}
and \eqref{eq:initialconditions} make up the Cauchy problem for the
inhomogeneous wave equation. We will not worry about the boundary
conditions, since they do not matter for the main conclusions
of this work if the problem domain is infinite, that is, the medium is large enough.

The equation \eqref{eq:waveequation} is an example of a hyperbolic
partial diferential equation, and its solution under conditions \eqref{eq:initialconditions}
is given by \cite{miersemann2012partial}:
\begin{eqnarray}
u\left(x,t\right) & = & \cfrac{1}{2}\left[f\left(x+ct\right)+f\left(x-ct\right)\right]+\cfrac{1}{2c}\int_{x-ct}^{x+ct}g\left(x'\right)dx'\nonumber \\
 &  & +\,\cfrac{1}{2c}\int_{0}^{t}\int_{x-c\left(t-t'\right)}^{x+c\left(t-t'\right)}s\left(x',t'\right)dx'dt'\label{eq:solution_wave_equation}
\end{eqnarray}
where the first two terms of the right side make up the d'Alembert
solution for the homogeneous wave equation \eqref{eq:homowaveequation},
while the third term accounts for the source effects on the wavefield.
They are a direct consequence of the \textit{Principle of Causality}
\cite{drabek2014elements}, which ensures that the value of the solution
at a point$\left(x,t\right)$ is only influenced by the values that
are within its dependency domain, that is, its \textit{past light
cone}. In other words, the effects of a cause cannot influence a point
that they did not have time to reach, considering the velocity of
propagation $c$. So, as one can see in equation \eqref{eq:solution_wave_equation},
if we do not consider sources (third term), the value of $u\left(x,t_{0}+t\right)$,
for any $x$, $t_{0}$ and $t$, depends only on the value of $u\left(x-ct,t_{0}\right)$
and $u\left(x+ct,t_{0}\right)$ or on the values of $\left.\partial_{t}u\right|_{t=t_{0}}$
inside the interval $x-ct$ to $x+ct$. 

Let us refer to the three terms of the solution \eqref{eq:solution_wave_equation}
as $F\left(x,t\right)$, $G\left(x,t\right)$ and $S\left(x,t\right)$,
respectively:
\begin{eqnarray}
F(x,t) & = & \cfrac{1}{2}\left[f\left(x+ct\right)+f\left(x-ct\right)\right]\mbox{,}\label{eq:Fdefinition}\\
G(x,t) & = & \cfrac{1}{2c}\int_{x-ct}^{x+ct}g\left(x'\right)dx'\mbox{,}\label{eq:Gdefinition}\\
S\left(x,t\right) & = & \cfrac{1}{2c}\int_{0}^{t}\int_{x-c\left(t-t'\right)}^{x+c\left(t-t'\right)}s\left(x',t'\right)dx'dt'\label{eq:Sdefinition}
\end{eqnarray}

so that:
\[
u\left(x,t\right)=F(x,t)+G(x,t)+S(x,t)
\]

\subsection{Impulsive Point source}

Especially in seismic exploration, sources are often modeled as impulsive
point sources (IPS). Although its time profile is not known in a
real survey situation, an IPS $s(x,t)$ is usually designed by using
an integrable and continuous wavelet function $\psi$  in the following
way \cite{cohen1979velocity,alford1974accuracy}:
\begin{equation}
s_{\sourceposition,\sourcetime}(x,t)=\psi(t-\sourcetime)\delta(x-\sourceposition)\label{eq:single-pulse-source}
\end{equation}
where $\delta$ is the Dirac delta function, $\sourcetime$ shifts
the wavelet in time and can be interpreted as the instant in which
there is the maximum rate of energy release, and $\sourceposition$
is the location of the source. The use of the Dirac delta function
as the spatial part of the model explicits that we are dealing with
a point source.

The use of wavelet functions is justified by the fact that they are
well localized in both time and frequency, can be easily scaled and
translated, and have a zero mean ($\intop_{-\infty}^{\infty}\psi\left(t\right)dt=0$)
\cite{mallat1999wavelet}. Not all wavelets have compact support,
but they usually are rapidly decreasing functions and, therefore,
vanish at infinity. It is desirable too to use square-integrable functions, that is, satisfying to $\intop_{-\infty}^{\infty}\left|\psi\left(t\right)\right|^{2}dt<\infty$.

Let us now insert the model \eqref{eq:single-pulse-source} into the ``source solution'' \eqref{eq:Sdefinition} of the wave equation: 
\begin{equation}
S_{c,\sourceposition,\sourcetime}^{\psi}\left(x,t\right)=\cfrac{1}{2c}\int_{0}^{t}\int_{x-c\left(t-t'\right)}^{x+c\left(t-t'\right)}\psi\left(t'-\sourcetime\right)\delta\left(x'-\sourceposition\right)dx'dt'\label{eq:Sinit}
\end{equation}  

It can be shown (appendix \ref{sec:derivS}), that this integral leads to:

\begin{equation}
S_{c,\sourceposition,\sourcetime}^{\psi}\left(x,t\right)=\cfrac{1}{2c}\left[\varphi\left(t-\sourcetime-\cfrac{\left|x-\sourceposition\right|}{c}\right)-\varphi\left(-\sourcetime\right)\right]\label{eq:Ssolution}
\end{equation}
where $\varphi\left(x\right)$ is the antiderivative of the wavelet
$\psi\left(x\right)$ on every closed interval (in practice, the ``indefinite
integral'' of $\psi\left(x\right)$), also having a zero mean. We
can interpret $\varphi$ as the waveform generated by the source.
As it will be shown further, the term $\varphi\left(t-\sourcetime-\left|x-\sourceposition\right|/c\right)$
takes the form of two wavefronts traveling in the opposite direction
to each other.

The solution \eqref{eq:Ssolution} still reveals that $\sourcetime$
has to have a positive minimum value that ensures that it is out of
the compact subset in which $\varphi\left(-t\right)$ is significative;
otherwise, the term $\varphi\left(-\sourcetime\right)$ would add
a significant constant to the $S^{\psi}$ function, causing it not to have zero mean. This guarantees that the brief supplying of energy
to the system by the source starts at a time $t>t_{0}$, since the
relevant source activity starts before $\sourcetime$.

\subsection{Initial time derivative condition}

Consider that the initial time derivative condition $g\left(x\right)$
is given by a wavelet $\psi(x)$, scaled and translated as:
\begin{equation}
g_{c,\sourceposition}(x)=\cfrac{1}{c}\,\psi\left(\cfrac{x-\sourceposition}{c}\right)\label{eq:gwavelet}
\end{equation}
Note that this wavelet is dependent on position $x$ instead of time
$t$ as in the definition \eqref{eq:single-pulse-source} of the source
$s$. This function is scaled in a different way with which the wavelet
is conventionally scaled $\psi_{c,x_{0}}(x)=\frac{1}{\sqrt{c}}\psi\left(\frac{x-x_{0}}{c}\right)$
\cite{mallat1999wavelet}. The reason for this choice will become clear later.

With the definition \eqref{eq:gwavelet}, the solution \eqref{eq:Gdefinition}
becomes:
\[
G_{c,\sourceposition}^{\psi}\left(x,t\right)=\cfrac{1}{2c}\int_{x-ct}^{x+ct}\cfrac{1}{c}\,\psi\left(\cfrac{x'-\sourceposition}{c}\right)dx'
\]

It is easy, by substitution of variables, to evaluate this integral
and obtain (appendix \ref{sec:derivG})
\begin{equation}
G_{c,\sourceposition}^{\psi}\left(x,t\right)=\cfrac{1}{2c}\left[\varphi\left(\cfrac{x-\sourceposition}{c}+t\right)-\varphi\left(\cfrac{x-\sourceposition}{c}-t\right)\right]\label{eq:Gsolution}
\end{equation}
in which $\varphi\left(x\right)$ is the antiderivative of the wavelet $\psi\left(x\right)$.
One can note that this solution, as the $S^{\psi}$, produces two
wavefronts moving in opposite directions to each other. Additionally,
the spatial part of this function has mirror symmetry with respect
to $\sourceposition$ position. That is the difference with the $F$
solution, as defined by \eqref{eq:Fdefinition}, which generates waveforms
that are simply reduced copies of the initial condition $f$: if one
sets $f$ as an asymmetric function, the $F$ solution will also be
asymmetric. This discourages us from using $f$ to emulate an IPS, since
the opposing waveforms produced by a impulsive source have reflection
symmetry to each other.

\subsection{Comparison of waveforms}

We are going now to show that the expressions \eqref{eq:Ssolution}
and \eqref{eq:Gsolution} for $S$ and $G$, respectively, in the
special case in which $\varphi$ is an odd function and vanishes at
infinity, generate almost equal waveforms, although out of phase with
each other.
\newtheorem{prop}{Proposition}
\begin{prop}
Let $\varphi$ be an odd function that vanishes at infinity. Let $G'\left(\xi,t\right)$
and $S'\left(\xi,t\right)$ be functions defined by \label{prop:propI}
\begin{equation}
G'\left(\xi,t\right)=\alpha\bigl[\varphi\left(\xi+t\right)-\varphi\left(\xi-t\right)\bigr]\label{eq:Gproposition}
\end{equation}
\begin{equation}
S'\left(\xi,t\right)=\alpha\bigl[\varphi\left(t-t_{0}-\left|\xi\right|\right)-\varphi\left(-t_{0}\right)\bigr]\label{eq:Sproposition}
\end{equation}
with $\alpha\in\mathbb{R}_{>0}$, $\xi\in\mathbb{R}$, $t\in\mathbb{R}_{\geqslant0}$
and $t_{0}\in\mathbb{R}_{\geqslant0}$. So, for all $\epsilon\in\mathbb{R}_{>0}$,
there exists a pair $(\tau,\tau_{0})\in\mathbb{R}_{\geqslant0}^{2}$
for which\textcolor{blue}{{} }
\[
\bigl|G'\left(\xi,t\right)-S'\left(\xi,t+t_{0}\right)\bigr|<\epsilon
\]
for all $\xi\in\mathbb{R}$, $t>\tau$ and $t_{0}>\tau_{0}$.
\end{prop}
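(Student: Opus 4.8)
The plan is to reduce the two-variable estimate to the elementary fact that $\varphi$ is small far from the origin, after exploiting the oddness of $\varphi$ to collapse almost all of the terms.

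First I would evaluate $S'$ at the shifted time: substituting $t\mapsto t+t_{0}$ in \eqref{eq:Sproposition} and using $\varphi(-t_{0})=-\varphi(t_{0})$ gives $S'(\xi,t+t_{0})=\alpha\bigl[\varphi(t-\left|\xi\right|)+\varphi(t_{0})\bigr]$. Next I would rewrite $G'$ with the same trick: since $\varphi(\xi-t)=-\varphi(t-\xi)$, \eqref{eq:Gproposition} becomes $G'(\xi,t)=\alpha\bigl[\varphi(\xi+t)+\varphi(t-\xi)\bigr]$. Subtracting, $G'(\xi,t)-S'(\xi,t+t_{0})=\alpha\bigl[\varphi(\xi+t)+\varphi(t-\xi)-\varphi(t-\left|\xi\right|)-\varphi(t_{0})\bigr]$.

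The key step is then a case split on the sign of $\xi$. If $\xi\geqslant 0$ then $\left|\xi\right|=\xi$, so $\varphi(t-\xi)=\varphi(t-\left|\xi\right|)$ and these two terms cancel, leaving $\alpha\bigl[\varphi(t+\xi)-\varphi(t_{0})\bigr]$. If $\xi<0$ then $\left|\xi\right|=-\xi$, so $\varphi(t-\left|\xi\right|)=\varphi(t+\xi)=\varphi(\xi+t)$, which cancels the first term, leaving $\alpha\bigl[\varphi(t-\xi)-\varphi(t_{0})\bigr]$. In both cases the surviving argument equals $t+\left|\xi\right|$, so, uniformly in $\xi$, one gets the clean identity $G'(\xi,t)-S'(\xi,t+t_{0})=\alpha\bigl[\varphi(t+\left|\xi\right|)-\varphi(t_{0})\bigr]$.

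Finally I would invoke that $\varphi$ vanishes at infinity: given $\epsilon>0$, choose $M>0$ with $\left|\varphi(y)\right|<\epsilon/(2\alpha)$ whenever $\left|y\right|>M$, and set $\tau=\tau_{0}=M$. Then for $t>\tau$ we have $t+\left|\xi\right|\geqslant t>M$, hence $\left|\varphi(t+\left|\xi\right|)\right|<\epsilon/(2\alpha)$, and for $t_{0}>\tau_{0}$ we have $\left|\varphi(t_{0})\right|<\epsilon/(2\alpha)$; the triangle inequality then yields $\left|G'(\xi,t)-S'(\xi,t+t_{0})\right|<\epsilon$ for all $\xi\in\mathbb{R}$. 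There is no genuine obstacle here; the only point requiring care is that the bound be uniform in $\xi$, and this is automatic once the difference has been reduced to $\varphi(t+\left|\xi\right|)-\varphi(t_{0})$, since $t+\left|\xi\right|$ is bounded below by $t$ independently of $\xi$. (One may also remark that an odd $\varphi$ vanishing at infinity automatically has zero mean in the principal-value sense, which is consistent with the earlier discussion, though it is not needed for the estimate.)
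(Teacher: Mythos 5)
Your proof is correct and follows essentially the same route as the paper's: the same case split on the sign of $\xi$, the same cancellation of terms via the oddness of $\varphi$, and the same final estimate combining the vanishing of $\varphi$ at infinity with the triangle inequality. The only (cosmetic) difference is that you merge the two cases into the single uniform identity $G'(\xi,t)-S'(\xi,t+t_{0})=\alpha\bigl[\varphi(t+\left|\xi\right|)-\varphi(t_{0})\bigr]$, which the paper leaves as two separate but symmetric estimates.
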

\begin{proof}
From the $G'$ and $S'$ definitions:
\begin{equation}
\bigl|G'\left(\xi,t\right)-S'\left(\xi,t+t_{0}\right)\bigr|=\alpha\bigl|\varphi\left(\xi+t\right)-\varphi\left(\xi-t\right)-\varphi\left(t-\left|\xi\right|\right)+\varphi\left(-t_{0}\right)\bigr|\label{eq:difGSI}
\end{equation}

Let us first consider the subset $\xi\geq0$, in which $\varphi\left(t-\left|\xi\right|\right)=\varphi\left(t-\xi\right)$.
As $\varphi$ is an odd function, $\varphi\left(t-\xi\right)=-\varphi\left(\xi-t\right)$,
and two terms cancel each other on the equation \eqref{eq:difGSI}:
\begin{equation}
\bigl|G'\left(\xi,t\right)-S'\left(\xi,t+t_{0}\right)\bigr|=\alpha\bigl|\varphi\left(\xi+t\right)+\varphi\left(-t_{0}\right)\bigr|\label{eq:gsproofI}
\end{equation}
Since $\varphi$ is a function that vanishes at infinity, given any
$\epsilon>0$ and $\alpha>0$, one can choose $\tau$ e $\tau_{0}$
such that
\begin{equation}
\bigl|\varphi\left(\xi+\tau\right)\bigr|<\cfrac{\epsilon}{2\alpha}\;,\;\;\;\;\;\;\;\bigl|\varphi\left(-\tau_{0}\right)\bigr|<\cfrac{\epsilon}{2\alpha}\label{eq:gsproofII}
\end{equation}

to any $\xi\geq0$. Therefore, applying to the equation \eqref{eq:gsproofI}
the relations \eqref{eq:gsproofII} and the subadditivity property
of absolute value:
\begin{equation}
\bigl|G'\left(\xi,t\right)-S'\left(\xi,t+t_{0}\right)\bigr|<\epsilon,\;\;\;\;\;\;\;\forall\xi\in\mathbb{R}_{\geqslant0}\label{eq:gsproofIII}
\end{equation}
for all $t>\tau$ and $t_{0}>\tau_{0}$.

For the subset $\xi<0$, in which $\varphi\left(t-\left|\xi\right|\right)=\varphi\left(t+\xi\right)$,
we get:
\[
\bigl|G'\left(\xi,t\right)-S'\left(\xi,t+t_{0}\right)\bigr|=\alpha\bigl|-\varphi\left(\xi-t\right)+\varphi\left(-t_{0}\right)\bigr|
\]

Using a procedure similar to that of the case $\xi\geq0$, we achieve
the same upper limit as \eqref{eq:gsproofIII}:
\begin{equation}
\bigl|G'\left(\xi,t\right)-S'\left(\xi,t+t_{0}\right)\bigr|<\epsilon,\;\;\;\;\;\;\;\forall\xi\in\mathbb{R}_{<0}\label{eq:gsproofIV}
\end{equation}

Therefore, the proposition is valid in the whole set of real numbers:
\[
\bigl|G'\left(\xi,t\right)-S'\left(\xi,t+t_{0}\right)\bigr|<\epsilon,\;\;\;\;\;\;\;\forall\xi\in\mathbb{R}
\]
\end{proof}

Realize that $G\left(x,t\right)$ and $S\left(x,t+\sourcetime\right)$,
according to the solutions \eqref{eq:Gsolution} and \eqref{eq:Ssolution},
can be represented in the forms \eqref{eq:Gproposition} and \eqref{eq:Sproposition},
respectively, with $\alpha=\left(2c\right)^{-1}$, $\xi=\left(x-\sourceposition\right)/c$,
$c\geq0$ and $\sourcetime=t_{0}$. Note also that this proposition
requires that $G$ and $S$ have the same factor $\alpha$, what explains
our definition \eqref{eq:gwavelet}.

The proposition \ref{prop:propI} reveals that, when solving the wave
equation \eqref{eq:waveequation}, if one chooses \eqref{eq:gwavelet}
as the initial condition $g\left(x\right)$ for the time derivative
of the wavefield, and sets $f\left(x\right)=0$ and $s(x,t)=0$, as
$t\rightarrow\infty$, one gets a solution with waveforms almost identical
to those that would be formed if, instead, the function $s(x,t)$
was defined as an impulsive point source (IPS) modeled as in equation \eqref{eq:single-pulse-source}.
This is the main achievement of this work. It is desirable that $\varphi(t)$
be a rapidly decreasing function, so that the convergence be fast.

\iffalse
It must be mentioned that there is already a well-established method that relates the solution of the initial value problem for the homogeneous equation to the solution of the inhomogeneous wave equation. That method includes the consideration that the same function is present in both the source term and the initial time derivative: it is the Duhamel's Principle. However, this relation is based on the integration of the homogeneous solution and does not involve the migration of the time domain to the space domain, as in our method described here. The Duhamel's Principle is the basis for obtaining the solution \eqref{eq:Sdefinition} of the inhomogeneous wave equation.
\fi

\section{Application of the Ricker wavelet\label{sec:ricker-wavelet}}

A very commonly used continuous wavelet for modeling short pulses
is the Ricker wavelet, which proved to be very suitable for modeling
seismic sources \cite{gholamy2014ricker}. A common definition in time domain is the negative second derivative of a gaussian function\cite{wang2015generalized, wang2015frequencies, wang2015ricker}:
\begin{equation}
\psirickeromega\left(t\right)=\left(1-\cfrac{\omega_{p}^{2}t^{2}}{2}\right)e^{-\omega_{p}^{2}t^{2}/4}\label{eq:rickerdef}
\end{equation}
where $t$ is the time in seconds and $\omega_{p}$ is the peak frequency,
that is, the most energetic frequency, in radians per second. Note
that, like other wavelets, this function has zero mean $(\intop_{-\infty}^{\infty}\psirickeromega\left(t\right)dt=0)$.
Its antiderivative function $\varphi$ corresponds to the first derivative
of a gaussian:
\begin{equation}
\phirickeromega\left(t\right)=\,t\,e^{-\omega_{p}^{2}t^{2}/4}\label{eq:rickerphidef}
\end{equation}

\begin{figure}
% using pgfplots

\begin{tikzpicture}[scale=0.8]
	\def\AA{1}
	\def\omegaA{1}

	\pgfplotsset{every tick label/.append style={font=\small}} 
	\begin{axis}[
		title={(a)},
		xmin=-5, xmax=5,
		ymin=-1, ymax=1.2,
		domain=-5:4.7,
		xtick={-3,-1,1,3},
		ytick={-1,1},
		samples=200, 
		legend cell align=left,
		legend style={at={(0.70,0.95)},anchor=west},
		axis on top=true,
		axis x line=middle,
		axis y line=middle,
		inner axis line style={thick},
		xlabel=$t$(s),
		every axis x label/.style={at={(ticklabel* cs:1.01)}, anchor=west},
	]
	\addplot[line width=1.1pt] (x,{\AA*(1-(\omegaA^2*x^2)/2)*exp(-(\omegaA^2*x^2)/4)});
	\addplot[line width=1.1pt, dash pattern=on 4pt off 1pt] (x,{x*exp(-(\omegaA^2*x^2)/4)});

    \legend{$\psirickeromegadef{1}(t)$,$\phirickeromegadef{1}(t)$} 
	\end{axis}
\end{tikzpicture}\;
\begin{tikzpicture}[scale=0.8]
	\def\AA{1}
	\def\omegaA{1}
	\def\omegaB{0.5}
	\def\omegaC{0.25}

	\pgfplotsset{every tick label/.append style={font=\small}} 
	\begin{axis}[
		title={(b)},
		ymin=-3.6, ymax=3.6,
		xmin=-20, xmax=20,
		domain=-20:19,
		ytick={-3,1,3},
		samples=200, 
		legend cell align=left,
		legend style={at={(0.55,0.20)},anchor=west},
		axis on top=true,
		axis x line=middle,
		axis y line=middle,
		inner axis line style={thick},
		xlabel=$t$(s),
		every axis x label/.style={at={(ticklabel* cs:1.01)}, anchor=west},
	]
	\addplot[line width=1.1pt] 	(x,	{x*exp(-(\omegaA^2*x^2)/4)}	);
	\addplot[line width=1.1pt, dash pattern=on 4pt off 1pt] 	(x,   {x*exp(-(\omegaB^2*x^2)/4)}	);
	\addplot[line width=1.1pt, dash pattern=on 4pt off 2pt on 1pt off 2pt] 	(x,   {x*exp(-(\omegaC^2*x^2)/4)}	);

%	\addplot[trans1, ultra thick] 	(x,	{\omegaA*x*exp(-(\omegaA^2*x^2)/4)}	);
%	\addplot[trans4, ultra thick] 	(x,   {\omegaB*x*exp(-(\omegaB^2*x^2)/4)}	);
%	\addplot[trans6, ultra thick] 	(x,   {\omegaC*x*exp(-(\omegaC^2*x^2)/4)}	);

%	\addplot[trans1, ultra thick] 	(x,	{\AA*(1-(\omegaA^2*x^2)/2)*exp(-(\omegaA^2*x^2)/4)}	);
%	\addplot[trans4, ultra thick] 	(x,   {\AA*(1-(\omegaB^2*x^2)/2)*exp(-(\omegaB^2*x^2)/4)}	);
%	\addplot[trans6, ultra thick] 	(x,  {\AA*(1-(\omegaC^2*x^2)/2)*exp(-(\omegaC^2*x^2)/4)}	);

	\legend{$\phirickeromegadef{1}(t)$,$\phirickeromegadef{1/2}(t)$,$\phirickeromegadef{1/4}(t)$} 
%	\legend{$\peakfreq\,\phirickeromegadef{1}(t)$,$\peakfreq\,\phirickeromegadef{1/2}(t)$,$\peakfreq\,\phirickeromegadef{1/4}(t)$} 
%   \legend{$\psirickeromegadef{1}(x)$,$\psirickeromegadef{1/2}(x)$,$\psirickeromegadef{1/4}(x)$} 
	\end{axis}
\end{tikzpicture}

\caption{(a) The Ricker wavelet $\psiricker$, as defined by \eqref{eq:rickerdef},
with $\omega_{p}=1\,\mbox{rad/s}$, and its antiderivative $\phiricker$
as defined by \eqref{eq:rickerphidef}. (b) Controlling the size of
the compact subset in which $\varphi^{r}$ is significative.\label{fig:ricker-wavelet}}
\end{figure}
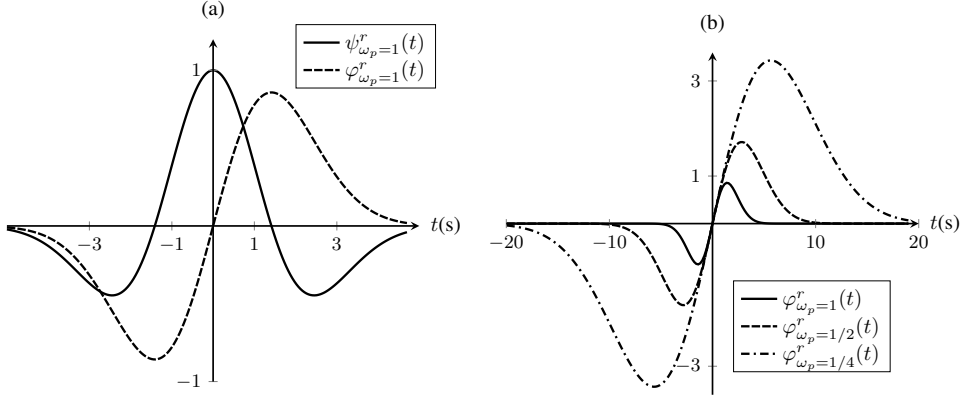

As one can see in figure \ref{fig:ricker-wavelet}-a, $\psiricker$
is an even function while $\phiricker$ is an odd function, that is,
$\psiricker$ is symmetric while $\phiricker$ is antisymmetric for
any $\omega_{p}$. Even though $\psiricker$ and $\phiricker$ are
not compactly supported, they are smooth functions, vanishing at infinity
and rapidly decreasing $\left(\mbox{o}\left(\left|t\right|^{-N}\right),\,\forall N\in\mathbb{R}_{>0}\right)$.
Therefore, $\phiricker$ meets the requirements of proposition \ref{prop:propI}.
By changing the peak frequency $\peakfreq$, one can manipulate the
size of the compact subset in which $\phiricker$ and its first derivative
$\psiricker$ are \textquotedblleft non-negligible\textquotedblright, as it is shown in figure \ref{fig:ricker-wavelet}-b.
Let us define the extension of the chosen significative subset as
\begin{equation}
\bandwidth=n\,\stdgauss\label{eq:bandwidth}
\end{equation}
in which $\stdgauss$ is the standard deviation of the corresponding
Gaussian function whose first derivative is $\phirickeromega$, and
$n$ is an integer positive number that sets how many standard deviations
are considered. In appendix \ref{sec:CalculusStdGauss} it is shown that $\stdgauss=\sqrt{2}/\peakfreq$.
Since the ratio between the value of the zero-centered gaussian function
at $t=4\,\stdgauss$ and its maximum value is of the order of $10^{-4}$,
we can say that good choices for $n$ are the ones starting at 4.

The reference frame for the source position is irrelevant, which alllow
us to set $\sourceposition=0$ and omit this parameter from now on.
So, the solutions $S^{\psi}$ and $G^{\psi}$, as stated by \eqref{eq:Ssolution}
and \eqref{eq:Gsolution}, employing the Ricker wavelet, can be written
as

\begin{equation}
S_{c,\peakfreq,\sourcetime}^{r}\left(x,t\right)=\cfrac{1}{2c}\left[\left(-\frac{\left|x\right|}{c}+t-\sourcetime\right)e^{-\omega_{p}^{2}\left(\left|x\right|/c-t+\sourcetime\right)^{2}/4}+\sourcetime e^{-\sourcetime^{2}\peakfreq^{2}/4}\right]\label{eq:Sricker}
\end{equation}
\begin{equation}
G_{c,\peakfreq}^{r}\left(x,t\right)=\cfrac{1}{2c^{2}}\left[(ct-x)e^{-\peakfreq^{2}(ct-x)^{2}/4c^{2}}+(ct+x)e^{-\omega_{p}^{2}(ct+x)^{2}/4c^{2}}\right]\label{eq:Gricker}
\end{equation}

Let us set, for convenience, $\peakfreq=1\mbox{ rad/s}$ and $c=1\mbox{\,m/s}$.
With this choice for $\peakfreq$, a suitable value for $\sourcetime$
would be $\bandwidthnomega{4}{1}/2=4\,\stdgaussomega{1}=4\sqrt{2}\,\mbox{s}\simeq5.66\,\mbox{s}$,
since we have to guarantee that half of the significant range of the
time wavelet fits between $t=0$ and $t=t_{s}$. In figures \ref{fig:evolutions}-a
and \ref{fig:evolutions}-b are shown the evolution of \eqref{eq:Sricker}
and \eqref{eq:Gricker}, respectively, with uniform time intervals.
It is remarkable how the two solutions, that start differing from
each other, converge quickly to form waveforms of the same shape,
amplitude and wavelength. To quantitatively measure this, let us define
the following difference operator:
\begin{equation}
d_{c,\peakfreq,\sourcetime}^{\,G^{r},S^{r}}\left(x,t\right)=\left|G_{c,\peakfreq}^{r}\left(x,t\right)-S_{c,\peakfreq,\sourcetime}^{r}\left(x,t+\sourcetime\right)\right|\label{eq:differenceOperator}
\end{equation}
where $S^{r}$ is being shifted by $t_{S}$. Figure \ref{fig:evolutions}-c
presents this comparison, allowing us to see how the difference between
the $G^{r}$ and $S^{r}$ functions decays with the time.

\begin{figure}
\begin{centering}
% using pgfplots

\begin{tikzpicture}[scale=1.0]

	\def\xm{10}
	\def\wp{1.0}
	\def\cc{1.0}

	\def\tO{5.65685}
	\def\tA{6.15685}
	\def\tB{7.15685}
	\def\tC{8.15685}
	\def\tD{9.15685}
	\def\tE{10.15685}
	\def\tF{11.15685}

	\pgfplotsset{every tick label/.append style={font=\small}} 
	\begin{axis}[
		width=15.5cm,
		height=5.5cm,
		xmin=-\xm, xmax=\xm,
		ymin=-0.5, ymax=1.0,
		domain=-\xm+0.3:\xm-0.3,
		xtick={-\xm,\xm},
		ytick={-0.5, 0.5},
		samples=200, 
		legend cell align=left,
		legend style={at={(0.82,0.73)},anchor=west},
		axis on top=true,
		axis x line=middle,
		axis y line=middle,
		inner axis line style={thick},
		xlabel=$x$(m),
		ylabel={$S^{r}_{c=1,\peakfreq=1,\sourcetime=4\sqrt{2}}(x,t)$},
		every axis y label/.style={at={(ticklabel* cs:1.01)}, anchor=south}, 
		every axis x label/.style={at={(ticklabel* cs:1.01)}, anchor=west}
	]
	\addplot[line width=1.1pt, dash pattern=on 1pt off 2pt] (x,{(\cc/2)*(\tO*exp(-((\wp^2)/4)*(\tO^2)) + (\tA-\tO-abs(x))*exp(-((\wp^2)/4)*(-\tA+\tO+abs(x))^2))});
	\addplot[line width=1.1pt, dash pattern=on 4pt off 2pt on 1pt off 2pt on 1pt off 2pt] (x,{(\cc/2)*(\tO*exp(-((\wp^2)/4)*(\tO^2)) + (\tB-\tO-abs(x))*exp(-((\wp^2)/4)*(-\tB+\tO+abs(x))^2))});
	\addplot[line width=1.1pt, dash pattern=on 4pt off 2pt on 1pt off 2pt] (x,{(\cc/2)*(\tO*exp(-((\wp^2)/4)*(\tO^2)) + (\tC-\tO-abs(x))*exp(-((\wp^2)/4)*(-\tC+\tO+abs(x))^2))});
	\addplot[line width=1.2pt,  dash pattern=on 4pt off 1pt on 4pt off 4pt] (x,{(\cc/2)*(\tO*exp(-((\wp^2)/4)*(\tO^2)) + (\tD-\tO-abs(x))*exp(-((\wp^2)/4)*(-\tD+\tO+abs(x))^2))});
	\addplot[line width=1.3pt, dash pattern=on 4pt off 2pt] (x,{(\cc/2)*(\tO*exp(-((\wp^2)/4)*(\tO^2)) + (\tE-\tO-abs(x))*exp(-((\wp^2)/4)*(-\tE+\tO+abs(x))^2))});
	\addplot[line width=1.4pt] (x,{(\cc/2)*(\tO*exp(-((\wp^2)/4)*(\tO^2)) + (\tF-\tO-abs(x))*exp(-((\wp^2)/4)*(-\tF+\tO+abs(x))^2))});

    \legend{
		$t=\sourcetime+0.5\,\mbox{s}$,
		$t=\sourcetime+1.0\,\mbox{s}$,
		$t=\sourcetime+1.5\,\mbox{s}$,
		$t=\sourcetime+2.0\,\mbox{s}$,
		$t=\sourcetime+2.5\,\mbox{s}$,
		$t=\sourcetime+3.0\,\mbox{s}$,
	} 

	\node at (axis cs:-10,1) [anchor=north west] {\large (a)}; 
	\end{axis}
\end{tikzpicture}
\par\end{centering}
\begin{centering}
% using pgfplots

\begin{tikzpicture}[scale=1.0]

	\def\xm{10}
	\def\wp{1.0}
	\def\cc{1.0}

	\def\tA{0.5}
	\def\tB{1.5}
	\def\tC{2.5}
	\def\tD{3.5}
	\def\tE{4.5}
	\def\tF{5.5}

	\pgfplotsset{every tick label/.append style={font=\small}} 
	\begin{axis}[
		width=15.5cm,
		height=5.5cm,
		xmin=-\xm, xmax=\xm,
		ymin=-0.5, ymax=1.0,
		domain=-\xm+0.3:\xm-0.3,
		xtick={-\xm,\xm},
		ytick={-0.5},
		samples=200, 
		legend cell align=left,
		legend style={at={(0.82,0.73)},anchor=west},
		axis on top=true,
		axis x line=middle,
		axis y line=middle,
		inner axis line style={thick},
		xlabel=$x$(m),
		ylabel={$G^{r}_{c=1,\peakfreq=1}(x,t)$},
		every axis y label/.style={at={(ticklabel* cs:1.01)}, anchor=south}, 
		every axis x label/.style={at={(ticklabel* cs:1.01)}, anchor=west}
	]
	\addplot[line width=1.1pt, dash pattern=on 1pt off 2pt] (x,{(\cc/2)*((\tA-x)*exp(-((\wp^2)/4)*(\tA-x)^2) + (\tA+x)*exp(-((\wp^2)/4)*(\tA+x)^2))});
	\addplot[line width=1.1pt, dash pattern=on 4pt off 2pt on 1pt off 2pt on 1pt off 2pt] (x,{(\cc/2)*((\tB-x)*exp(-((\wp^2)/4)*(\tB-x)^2) + (\tB+x)*exp(-((\wp^2)/4)*(\tB+x)^2))});
	\addplot[line width=1.1pt, dash pattern=on 4pt off 2pt on 1pt off 2pt] (x,{(\cc/2)*((\tC-x)*exp(-((\wp^2)/4)*(\tC-x)^2) + (\tC+x)*exp(-((\wp^2)/4)*(\tC+x)^2))});
	\addplot[line width=1.2pt, dash pattern=on 4pt off 1pt on 4pt off 4pt] (x,{(\cc/2)*((\tD-x)*exp(-((\wp^2)/4)*(\tD-x)^2) + (\tD+x)*exp(-((\wp^2)/4)*(\tD+x)^2))});
	\addplot[line width=1.3pt, dash pattern=on 4pt off 2pt] (x,{(\cc/2)*((\tE-x)*exp(-((\wp^2)/4)*(\tE-x)^2) + (\tE+x)*exp(-((\wp^2)/4)*(\tE+x)^2))});
	\addplot[line width=1.4pt] (x,{(\cc/2)*((\tF-x)*exp(-((\wp^2)/4)*(\tF-x)^2) + (\tF+x)*exp(-((\wp^2)/4)*(\tF+x)^2))});

    \legend{
		$t=\tA\,$s,
		$t=\tB\,$s,
		$t=\tC\,$s,
		$t=\tD\,$s,
		$t=\tE\,$s,
		$t=\tF\,$s,
	} 

	\node at (axis cs:-10,1) [anchor=north west] {\large (b)}; 
	\end{axis}
\end{tikzpicture}
\par\end{centering}
\begin{centering}
% using pgfplots

\begin{tikzpicture}[scale=1.0]

	\def\xm{10}
	\def\wp{1.0}
	\def\cc{1.0}

	\def\tO{5.0}

	\def\tA{0.5}
	\def\tB{1.5}
	\def\tC{2.5}
	\def\tD{3.5}
	\def\tE{4.5}
	\def\tF{5.5}

	\pgfplotsset{every tick label/.append style={font=\small}} 
	\begin{axis}[
		width=15.5cm,
		height=4cm,
		xmin=-\xm, xmax=\xm,
		ymin=-0.05, ymax=0.55,
		domain=-\xm+0.3:\xm-0.3,
		xtick={-10,-5,0,5,10},
		ytick={0.5},
		samples=200, 
		legend cell align=left,
		legend style={at={(0.82,0.73)},anchor=west},
		axis on top=true,
		axis x line=middle,
		axis y line=middle,
		inner axis line style={thick},
		xlabel=$x$(m),
		ylabel={$d_{c=1,\peakfreq=1,\sourcetime=4\sqrt{2}}^{\,G^{r},S^{r}}\left(x,t\right)$},
		every axis y label/.style={at={(ticklabel* cs:1.01)}, anchor=south}, 
		every axis x label/.style={at={(ticklabel* cs:1.01)}, anchor=west}
	]
	\addplot[line width=1.1pt, dash pattern=on 1pt off 2pt] (x,{(\cc/2)*((\tA-x)*exp(-((\wp^2)/4)*(\tA-x)^2) + (\tA+x)*exp(-((\wp^2)/4)*(\tA+x)^2)) - (\cc/2)*(\tO*exp(-((\wp^2)/4)*(\tO^2)) + ((\tA+\tO)-\tO-abs(x))*exp(-((\wp^2)/4)*(-(\tA+\tO)+\tO+abs(x))^2))});
	\addplot[line width=1.1pt, dash pattern=on 4pt off 2pt on 1pt off 2pt on 1pt off 2pt] (x,{(\cc/2)*((\tB-x)*exp(-((\wp^2)/4)*(\tB-x)^2) + (\tB+x)*exp(-((\wp^2)/4)*(\tB+x)^2)) -  (\cc/2)*(\tO*exp(-((\wp^2)/4)*(\tO^2)) + ((\tB+\tO)-\tO-abs(x))*exp(-((\wp^2)/4)*(-(\tB+\tO)+\tO+abs(x))^2))});
	\addplot[line width=1.1pt, dash pattern=on 4pt off 2pt on 1pt off 2pt] (x,{(\cc/2)*((\tC-x)*exp(-((\wp^2)/4)*(\tC-x)^2) + (\tC+x)*exp(-((\wp^2)/4)*(\tC+x)^2)) - (\cc/2)*(\tO*exp(-((\wp^2)/4)*(\tO^2)) + ((\tC+\tO)-\tO-abs(x))*exp(-((\wp^2)/4)*(-(\tC+\tO)+\tO+abs(x))^2))});
	\addplot[line width=1.2pt, dash pattern=on 4pt off 1pt on 4pt off 4pt] (x,{(\cc/2)*((\tD-x)*exp(-((\wp^2)/4)*(\tD-x)^2) + (\tD+x)*exp(-((\wp^2)/4)*(\tD+x)^2)) - (\cc/2)*(\tO*exp(-((\wp^2)/4)*(\tO^2)) + ((\tD+\tO)-\tO-abs(x))*exp(-((\wp^2)/4)*(-(\tD+\tO)+\tO+abs(x))^2))});
	\addplot[line width=1.3pt, dash pattern=on 4pt off 2pt] (x,{(\cc/2)*((\tE-x)*exp(-((\wp^2)/4)*(\tE-x)^2) + (\tE+x)*exp(-((\wp^2)/4)*(\tE+x)^2)) - (\cc/2)*(\tO*exp(-((\wp^2)/4)*(\tO^2)) + ((\tE+\tO)-\tO-abs(x))*exp(-((\wp^2)/4)*(-(\tE+\tO)+\tO+abs(x))^2))});
	\addplot[line width=1.4pt] (x,{(\cc/2)*((\tF-x)*exp(-((\wp^2)/4)*(\tF-x)^2) + (\tF+x)*exp(-((\wp^2)/4)*(\tF+x)^2)) - (\cc/2)*(\tO*exp(-((\wp^2)/4)*(\tO^2)) + ((\tF+\tO)-\tO-abs(x))*exp(-((\wp^2)/4)*(-(\tF+\tO)+\tO+abs(x))^2))});

    \legend{
		$t=\tA\,$s,
		$t=\tB\,$s,
		$t=\tC\,$s,
		$t=\tD\,$s,
		$t=\tE\,$s,
		$t=\tF\,$s,
	}

	\node at (axis cs:-10,0.5) [anchor=north west] {\large (c)}; 
	\end{axis}
\end{tikzpicture}
\par\end{centering}
\caption{(a) Evolution of $S^{r}$ for $\sourceposition=0$, $c=1\,\mbox{m/s}$,
$\peakfreq=1\,\mbox{rad/s}$ and $\sourcetime=4\,\stdgauss=4\sqrt{2}\,\mbox{s}$.
(b) Evolution of $G$ for $\sourceposition=0$, $c=1$ m/s, $\peakfreq=1\,\mbox{rad/s}$.
(c) Evolution of the distance operator $d_{c,\peakfreq,\sourcetime}^{\,G^{r},S^{r}}\left(x,t\right)$
for $\sourceposition=0$, $c=1$ m/s, $\peakfreq=1\,\mbox{rad/s}$
and $\sourcetime=4\,\stdgauss=4\sqrt{2}\,\mbox{s}$.\label{fig:evolutions}}
\end{figure}

In order to better observe how the difference between $G^{r}$ and
$S^{r}$ depends on $t$ and is affected by the chosen $\sourcetime$,
we are now going to define a distance measure operation between $G^{r}$
and $S^{r}$. First, let the following norm of an absolutely integrable
function dependent on $x$ and $t$ be defined:
\begin{equation}
\left\Vert f\left(x,t\right)\right\Vert \left(t\right)=\int_{-\infty}^{\infty}\left|f\left(x,t\right)\right|dx,\;\;\;\;\;\;\;\;\;f\in L^{1}\left(\mathbb{R},x\right)\label{eq:norm}
\end{equation}
where $L^{1}\left(\mathbb{R},x\right)$ denotes the space of functions
that are absolutely integrable in $x\in\mathbb{R}$ $\left(\intop_{-\infty}^{\infty}\left|f(x,t)\right|dx<\infty\right)$.
Figure \ref{fig:GmSintegral} shows the numerical results of calculating
the distance operator 
\begin{equation}
D_{c,\peakfreq,\sourcetime}^{G^{r},S^{r}}\left(t\right)=\frac{\left\Vert d_{c,\peakfreq,\sourcetime}^{\,G^{r},S^{r}}\left(x,t\right)\right\Vert \left(t\right)}{\left\Vert G^{r}\left(x,t\right)\right\Vert \left(t\right)}\label{eq:distanceMeasure}
\end{equation}
where the values of $\sourcetime$ are in terms of the standard deviation
$\sigma$ of the corresponding gaussian function. The integrals were
computed only over the interval from $\sourceposition-ct-\bandwidth/2$
to $\sourceposition+ct+\bandwidth/2$, since outside it the values
of $G^{r}$ and $S^{r}$ are, by the Causality Principle and the definition
of $\bandwidth$, negligible. The values of $c$ and $\peakfreq$
were set as 1, with their respective units. As one can see, to the extent that we delay the source time $\sourcetime$, the distance between the solutions decays to smaller values as the time advances. However, we have to keep in mind that more delayed $\sourcetime$'s result in larger time lags between the $G$ and $S$ solutions, what can lead to major discrepancies when dealing with heterogeneous media.

\begin{figure}
\begin{centering}
% using pgfplots

\begin{tikzpicture}[scale=1.0]

\begin{semilogyaxis}[
	width=13cm,
	height=6.5cm,
	xlabel=$t(\textrm{s})$,
	ylabel={$D^{G^{r},S^{r}}_{1,1,\sourcetime}\left(t\right)$},
	ylabel near ticks, 
	yticklabel pos=right,
	ymajorgrids,
	every major grid/.style={opacity=0.7},
	xtick={1,2,3,4,5,6,7,8,9,10},
	ytick={1e1,1e0,1e-1,1e-2,1e-3,1e-4,1e-5,1e-6,1e-7,1e-8,1e-9,1e-10,1e-11},
    % log ticks with fixed point,
	legend style={
		cells={anchor=east},
		legend pos=south west,
	}
]

\addplot[mark=*,mark options={scale=1,solid}, line width=1.0pt, dash pattern=on 4pt off 2pt on 1pt off 2pt on 1pt off 2pt] 
table [x=a, y=b, col sep=comma] {data1.csv};
\addplot[mark=o,mark options={solid}, line width=1.0pt, dash pattern=on 4pt off 2pt on 1pt off 2pt] 
table [x=a, y=c, col sep=comma] {data1.csv};
\addplot[mark=square*, mark options={solid}, line width=1.0pt, dash pattern=on 4pt off 1pt on 4pt off 4pt] 
table [x=a, y=d, col sep=comma] {data1.csv};
\addplot[mark=square, mark options={solid}, line width=1.0pt, dash pattern=on 4pt off 2pt] 
table [x=a, y=e, col sep=comma] {data1.csv};
\addplot[mark=triangle*, mark options={solid}, line width=1.0pt] 
table [x=a, y=f, col sep=comma] {data1.csv};

\legend{
	{$\sourcetime=4\,\stdgauss$},
	{$\sourcetime=5\,\stdgauss$},
	{$\sourcetime=6\,\stdgauss$},
	{$\sourcetime=7\,\stdgauss$},
	{$\sourcetime=8\,\stdgauss$},
} 

\end{semilogyaxis}

\end{tikzpicture}
\par\end{centering}
\caption{Semi-log plot of $D_{c,\peakfreq,\sourcetime}^{G^{r},S^{r}}\left(t\right)$,
as defined by \eqref{eq:distanceMeasure}, with $c=1$ m/s and $\peakfreq=1$
rad/s, versus $t$, to measure how the distance between the $G^{r}$
and $S^{r}$ solutions decays as the time advances, according to distinct
source times $\sourcetime$'s.\label{fig:GmSintegral}}
\end{figure}
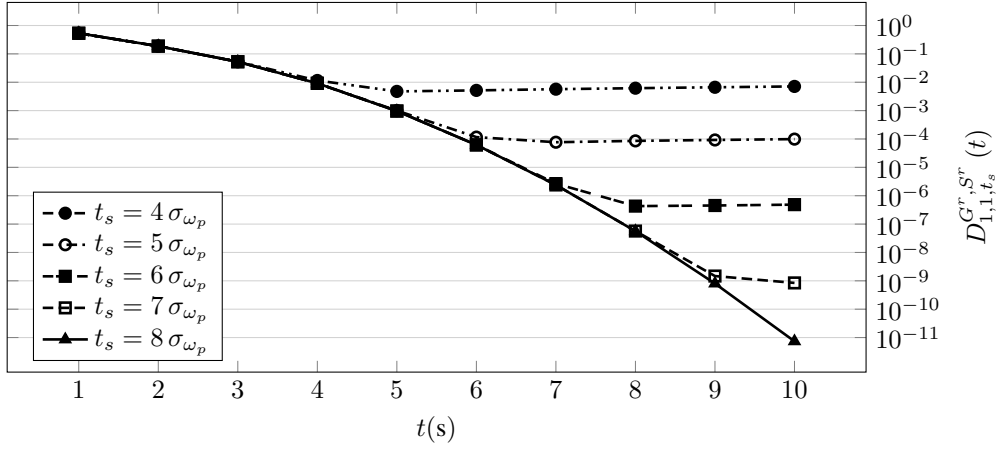

\section{Energy analysis of the solutions based on the Ricker wavelet \label{sec:Energy-Analysis}}

To a better understanding of how the solutions $G^{\psi}$ and $S^{\psi}$
form the wavefield, it is convenient to make an energy analysis of
the both solutions. The wave energy can be simply defined as follows:

\begin{equation}
E_{c}\left(t\right)=\cfrac{1}{2}\int_{-\infty}^{\infty}\left(\partial_{t}u\right)^{2}dx+\cfrac{c^{2}}{2}\int_{-\infty}^{\infty}\left(\partial_{x}u\right)^{2}dx\label{eq:energyexpr}
\end{equation}

Physically, $E_{c}(t)$ is the total energy of the system. The term
with $\left(\partial_{t}u\right)^{2}$ corresponds to the kinectic
energy, while the one with $\left(\partial_{x}u\right)^{2}$ corresponds
to the potential energy.

\begin{equation}
K\left(t\right)=\cfrac{1}{2}\int_{-\infty}^{\infty}\left(\partial_{t}u\right)^{2}dx\label{eq:Kexpr}
\end{equation}
\begin{equation}
U_{c}\left(t\right)=\cfrac{c^{2}}{2}\int_{-\infty}^{\infty}\left(\partial_{x}u\right)^{2}dx\label{eq:Uexpr}
\end{equation}

The homogeneous wave equation \eqref{eq:homowaveequation} has no
terms that insert or dissipate energy. Therefore, it is to be expected
that the identity $\frac{d}{dt}E_{c}(t)=0$ be valid.

To get the energy expressions associated with the Ricker-based initial
derivative solution, one can just substitute $u$ in \eqref{eq:Kexpr}
and \eqref{eq:Uexpr} by $G^{r}$ \eqref{eq:Gricker}. The algebraic
calculation is straightforward and lengthy, and can be directly done
with the aid of a computer algebra system. The obtained kinetic energy
expression is 
\begin{equation}
K_{c,\peakfreq}^{G^{r}}\left(t\right)=\frac{1}{8\,c\,\peakfreq}\sqrt{\frac{\pi}{2}}e^{-\frac{1}{2}\peakfreq^{2}t^{2}}\left(\peakfreq^{4}t^{4}-6\peakfreq^{2}t^{2}+3+3e^{\frac{1}{2}\peakfreq^{2}t^{2}}\right)\label{eq:EkGr}
\end{equation}
and the potential energy
\begin{equation}
U_{c,\peakfreq}^{G^{r}}\left(t\right)=\frac{1}{8\,c\,\peakfreq}\sqrt{\frac{\pi}{2}}e^{-\frac{1}{2}\peakfreq^{2}t^{2}}\left(-\peakfreq^{4}t^{4}+6\peakfreq^{2}t^{2}-3+3e^{\frac{1}{2}\peakfreq^{2}t^{2}}\right)\label{eq:EpGr}
\end{equation}

One can easily check that both the expressions given in \eqref{eq:EkGr} and \eqref{eq:EpGr} tend to the same constant value as time tends to infinity. 
The constant total energy associated with the $G^{r}$ solution is the sum
of the kinectic \eqref{eq:EkGr} and potential \eqref{eq:EpGr} energies:
\begin{equation}
E_{c,\peakfreq}^{G^{r}}=\frac{3}{4\,c\,\peakfreq}\sqrt{\frac{\pi}{2}}\label{eq:ETGr}
\end{equation}
\iffalse
That is, a constant function, since the conservation of mechanical
energy on isolated systems without dissipative forces has to be obeyed.
\fi

\begin{figure}[h]
% using pgfplots
\begin{tikzpicture}[scale=1.0]
	\def\xm{6.3}
	\def\tO{5.0}
	\def\tA{5.5}
	\def\tB{6.5}
	\def\tC{7.5}
	\def\tD{8.5}
	\def\tE{9.5}
	\def\tF{10.5}
	\def\omegaA{1.0}
	\def\omegaB{2.0}
	\def\omegaC{3.0}

	\pgfplotsset{every tick label/.append style={font=\small}} 
	\begin{axis}[
		width=15cm,
		height=6cm,
		xmin=0, xmax=\xm,
		ymin=0, ymax=1.0,
		domain=-\xm:\xm,
		grid=major,
	    every major grid/.style={opacity=0.7}, 
		xtick={0.0,1,2,3,4,5,6,7},
		ytick={0.47,0.94},
		samples=200, 
		legend cell align=left,
		legend style={at={(0.78,0.50)},anchor=west},
		axis on top=true,
		axis x line=middle,
		axis y line=middle,
		xlabel=$t(\textrm{s})$,
		ylabel=\large (a),
		title=$G^{r}$ energy,
		every axis y label/.style={at={(ticklabel* cs:1.01)}, anchor=south}, 
		every axis x label/.style={at={(ticklabel* cs:1.01)}, anchor=west}
	]
	\addplot[line width=1.8pt, dash pattern=on 4pt off 2pt on 1pt off 2pt on 1pt off 2pt]
(x,{exp(-(x^2)/2)*sqrt(pi/2)*(3+3*exp((x^2)/2)-6*x^2+x^4)/(8*\omegaA)});
	\addplot[line width=1.8pt, dash pattern=on 4pt off 2pt] 
(x,{exp(-(x^2)/2)*sqrt(pi/2)*(-3+3*exp((x^2)/2)+6*x^2-x^4)/(8*\omegaA)});
	\addplot[line width=1.8pt]
(x,{3*sqrt(pi/2)/(4*\omegaA)});
	\addplot[grays1, line width=1.0pt, dash pattern=on 4pt off 2pt on 1pt off 2pt on 1pt off 2pt]
(x,{exp(-(\omegaB^2*x^2)/2)*sqrt(pi/2)*(3+3*exp((\omegaB^2*x^2)/2)-6*\omegaB^2*x^2+\omegaB^4*x^4)/(8*\omegaB)});
	\addplot[grays1,line width=1.0pt, dash pattern=on 4pt off 2pt]
(x,{exp(-(\omegaB^2*x^2)/2)*sqrt(pi/2)*(-3+3*exp((\omegaB^2*x^2)/2)+6*\omegaB^2*x^2-\omegaB^4*x^4)/(8*\omegaB)});
	\addplot[grays1,line width=1.0pt]
(x,{3*sqrt(pi/2)/(4*\omegaB)});

    \legend{
		{$K_{c,\peakfreq=1}^{G,r}(t)$},
		{$U_{c,\peakfreq=1}^{G,r}(t)$},
		{$E_{c,\peakfreq=1}^{G,r}(t)$},
		{$K_{c,\peakfreq=2}^{G,r}(t)$},
		{$U_{c,\peakfreq=2}^{G,r}(t)$},
		{$E_{c,\peakfreq=2}^{G,r}(t)$},
	} 
	\end{axis}
\end{tikzpicture}% using pgfplots
\begin{tikzpicture}[scale=1.0,     
		declare function={erf(\x)=%
      	(1+(e^(-(\x*\x))*(-265.057+abs(\x)*(-135.065+abs(\x)%
      	*(-59.646+(-6.84727-0.777889*abs(\x))*abs(\x)))))%
      	/(3.05259+abs(\x))^5)*(\x>0?1:-1);}]
	\def\xm{12.0}
	\def\tO{5.0}
	\def\tA{5.5}
	\def\tB{6.5}
	\def\tC{7.5}
	\def\tD{8.5}
	\def\tE{9.5}
	\def\tF{10.5}
	\def\omegaA{1.0}
	\def\omegaB{2.0}
	\def\omegaC{3.0}

	\pgfplotsset{every tick label/.append style={font=\small}} 
	\begin{axis}[
		width=15cm,
		height=6cm,
		xmin=0, xmax=\xm,
		ymin=0, ymax=1.0,
		domain=0:\xm,
		grid=major,
	    every major grid/.style={opacity=0.7},
		xtick={0.0,1,3,5,7,9,11},
		ytick={0.47,0.94},
		samples=200, 
		legend cell align=left,
		legend style={at={(0.03,0.65)},anchor=west},
		axis on top=true,
		axis x line=middle,
		axis y line=middle,
		xlabel=$t(\textrm{s})$,
		ylabel=\large (b),
		title=$S^{r}$ energy,
		every axis y label/.style={at={(ticklabel* cs:1.01)}, anchor=south}, 
		every axis x label/.style={at={(ticklabel* cs:1.01)}, anchor=west}
	]
    \addplot[line width=1.8pt, dash pattern=on 4pt off 2pt]
(x,{(1/(16*\omegaA)) * (  3*sqrt(pi/2)*(erf((x-\tO)*\omegaA/sqrt(2))+1) - (x-\tO)*\omegaA*exp(-((x-\tO)^2*\omegaA^2)/2)*((x-\tO)^2*\omegaA^2 -1) )});
	\addplot[line width=1.8pt]
(x,{(1/(8*\omegaA)) * (  3*sqrt(pi/2)*(erf((x-\tO)*\omegaA/sqrt(2))+1) - (x-\tO)*\omegaA*exp(-((x-\tO)^2*\omegaA^2)/2)*((x-\tO)^2*\omegaA^2 -1) )});
	\addplot[grays1, line width=1.0pt, dash pattern=on 4pt off 2pt]
(x,{(1/(16*\omegaB)) * (  3*sqrt(pi/2)*(erf((x-\tO)*\omegaB/sqrt(2))+1) - (x-\tO)*\omegaB*exp(-((x-\tO)^2*\omegaB^2)/2)*((x-\tO)^2*\omegaB^2 -1) )});
	\addplot[grays1, line width=1.0pt]
(x,{(1/(8*\omegaB)) * (  3*sqrt(pi/2)*(erf((x-\tO)*\omegaB/sqrt(2))+1) - (x-\tO)*\omegaB*exp(-((x-\tO)^2*\omegaB^2)/2)*((x-\tO)^2*\omegaB^2 -1) )});

    \legend{
		{$K_{c,\peakfreq=1}^{S,r}(t)=U_{c,\peakfreq=1}^{S,r}(t)$},
		{$E_{c,\peakfreq=1}^{S,r}(t)$},
		{$K_{c,\peakfreq=2}^{S,r}(t)=U_{c,\peakfreq=2}^{S,r}(t)$},
		{$E_{c,\peakfreq=2}^{S,r}(t)$},
	} 
	\end{axis}
\end{tikzpicture}

\caption{Evolution in time of the kinectic energy $K_{c,\peakfreq}$, potential
energy $U_{c,\peakfreq}$ and total energy $E_{c,\peakfreq}$for the
$G^{r}$ and $S^{r}$ solutions of the wave equation. There are shown
the cases in which the peak frequency $\peakfreq$ is set as $1\,\mbox{rad/s}$
and $2\,\mbox{rad/s}$. (a) $G^{r}$ energy for $c=1$ m/s. (b) $S^{r}$
energy for $c=1$ m/s and $\sourcetime=5$ s.\label{fig:GSenergy}}
\end{figure}
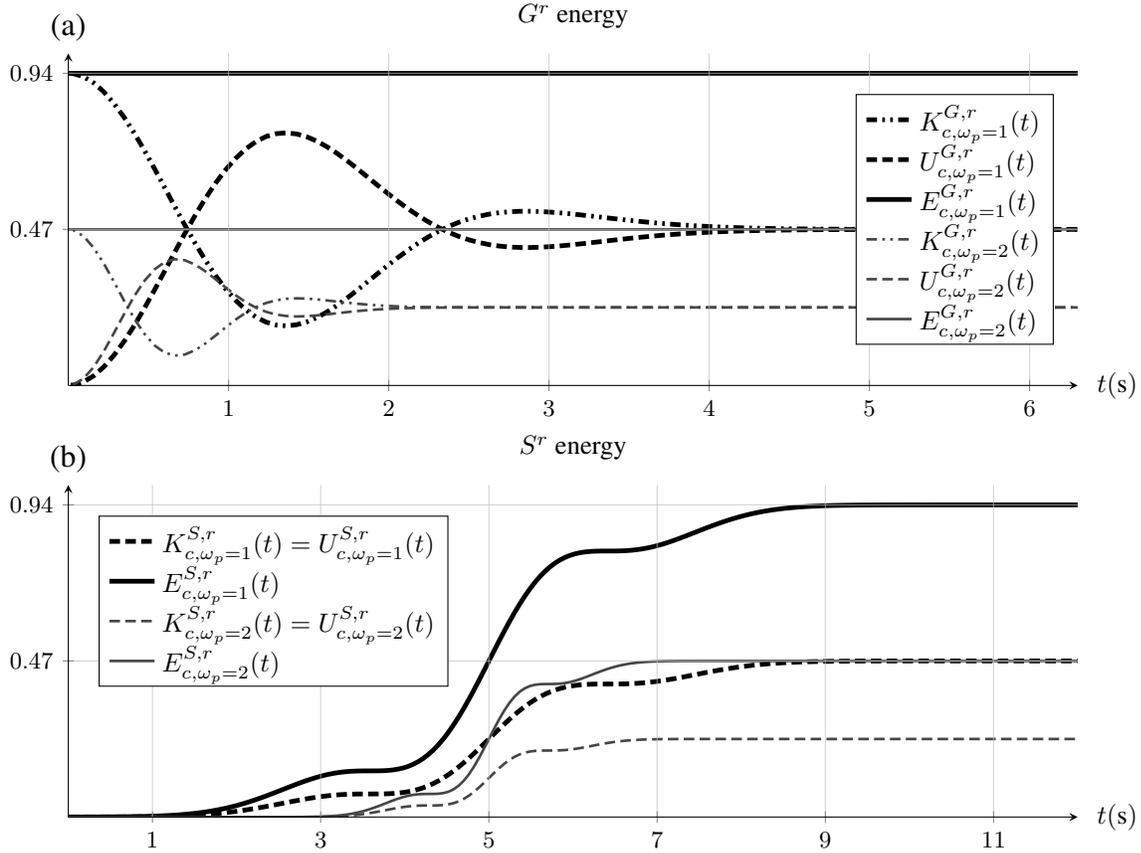

By applying the $S^{r}$ solution of the inhomogeneous wave equation
\eqref{eq:Sricker} to the energy formula \eqref{eq:energyexpr},
and after some tedious calculation, one can get the expression for
the kinetic energy: 
\begin{eqnarray}
K_{c,\peakfreq}^{S^{r}}\left(t\right) & = & \cfrac{3}{16\,c\,\peakfreq}\sqrt{\frac{\pi}{2}}\left[\text{erf}\left(\frac{\peakfreq\left(t-t_{s}\right)}{\sqrt{2}}\right)+1\right]\nonumber \\
 &  & -\cfrac{1}{16\,c}\left(t-t_{s}\right)e^{-\frac{1}{2}\peakfreq^{2}\left(t-t_{s}\right)^{2}}\left[\peakfreq^{2}\left(t-t_{s}\right)^{2}-1\right]\label{eq:EkSr}
\end{eqnarray}
where erf is the error function. Curiously, we get the same expression
for the potential energy $U_{c,\peakfreq}^{S^{r}}\left(t\right)$. This equipartition of energy in wave motion is predicted by Duffin\cite{duffin1970equipartition}, that has shown that, if the solution has compact support, in an odd-dimensional space, after a finite time, the kinetic energy of the wave is constant and equals the potential energy.
Therefore, the total energy has the expression:
\begin{eqnarray}
E_{c,\peakfreq}^{S^{r}}\left(t\right)=2K_{c,\peakfreq}^{S^{r}}\left(t\right) & = & \cfrac{3}{8\,c\,\peakfreq}\sqrt{\frac{\pi}{2}}\left[\text{erf}\left(\frac{\peakfreq\left(t-t_{s}\right)}{\sqrt{2}}\right)+1\right]\nonumber \\
 &  & -\cfrac{1}{8\,c}\left(t-t_{s}\right)e^{-\frac{1}{2}\peakfreq^{2}\left(t-t_{s}\right)^{2}}\left[\peakfreq^{2}\left(t-t_{s}\right)^{2}-1\right]\label{eq:ETSr}
\end{eqnarray}

Note that, since the point source inserts energy into the system,
the total energy associated with $S$ varies with $t.$ However, because of the short duration of the source, it can be easily shown that the maximum
value of \eqref{eq:ETSr}, corresponding to its limit as $t$ tends to
infinity, matches the same $G^{r}$ total energy \eqref{eq:ETGr} (appendix 
\ref{sec:LimitETSr}):
\begin{equation}
E_{c,\peakfreq,\textrm{max}}^{S^{r}}=\lim_{t\rightarrow\infty}E_{c,\peakfreq}^{S^{r}}\left(t\right)=\frac{3}{4\,c\,\peakfreq}\sqrt{\frac{\pi}{2}}=E_{c,\peakfreq}^{G^{r}}\label{eq:ETSrLimit}
\end{equation}

For example, for the case $c=1\,\mbox{m/s}$ and $\peakfreq=1\,\mbox{rad/s}$, we have $E_{c,\peakfreq,\textrm{max}}^{S^{r}}=E_{c,\peakfreq}^{G^{r}}\simeq0.94$.
In figure \ref{fig:GSenergy} is shown the energy evolution of each solution, with two different peak frequencies. In figure \ref{fig:GSenergy}-a, we can see that
all the energy of wave propagation for $G$ solution is available
from the initial time in the form of kinectic energy. We can also
observe that, after oscilating during a brief time, $K^{G,r}$ and
$U^{G,r}$ converge to the same value corresponding to the half of
the total energy. As shown in figure \ref{fig:GSenergy}-b, the IPS
releases energy to the system during a time determined by $\peakfreq$
and with a rate that reaches its maximum at $\sourcetime=5$ s, and
with $K^{G,r}$ being equal to $U^{G,r}$ all the time. The effect
of setting the peak frequency $\peakfreq$ as $2\,\mbox{rad/s}$ instead
of $1\,\mbox{rad/s}$ is to cut the total energy in half, as can be
predicted by the equation \eqref{eq:ETSrLimit}.

\section{Numerical Implementation\label{sec:Numerical-Implementation}}

A conventional way of discretizing the inhomogeneous wave equation
\eqref{eq:waveequation} consists of approximating the derivatives
of second order with centered differences in a mesh of grid points
$\left(x_{i},\,t_{n}\right)=\left(i\Delta x,\,n\Delta t\right)$ \cite{kukudzhanov2013numerical}:
\begin{equation}
\begin{cases}
\partial_{x}^{2}u\left(x_{i},t_{n}\right) & \approx\cfrac{u_{i+1,n}-2u_{i,n}+u_{i-1,n}}{\left(\Delta x\right)^{2}}\\
\partial_{t}^{2}u\left(x_{i},t_{n}\right) & \approx\cfrac{u_{i,n+1}-2u_{i,n}+u_{i,n-1}}{\left(\Delta t\right)^{2}}
\end{cases}\label{eq:numDeriv}
\end{equation}
where $i=1..\,N_{x}$ and $n=1..\,N_{t}$ are the spatial and temporal
indexes, respectively. When \eqref{eq:numDeriv} is applied to the
inhomogeneous wave equation \eqref{eq:waveequation} leads to the
finite difference (FD) scheme

\begin{equation}
u_{i,n+1}^{(s)}=-u_{i,n-1}^{(s)}+2u_{i,n}^{(s)}+C^{2}\left(u_{i+1,n}^{(s)}-2u_{i,n}^{(s)}+u_{i-1,n}^{(s)}\right)-\Delta t^{2}s_{i,n}\label{eq:numSolve}
\end{equation}
where the the superscript $(s)$ makes explicit that the energy of
this solution is provided by the source, and $C=c\frac{\Delta t}{\Delta x}$
is the so-called \textit{Courant number}, a dimensionless number that
is related to the numerical stability of solving the wave equation
by using the finite difference method. We consider the simple case
in which $c$ (and consequently also $C$) is uniform in all the mesh,
that is, the medium is homogeneous (not to be confused with the homogeneous
wave equation) The Courant\textendash Friedrichs\textendash Lewy (CFL)
condition demands that, for one-dimensional case \cite{courant1967partial},
\begin{equation}
C\leq1\label{eq:cflcond}
\end{equation}

The expression \eqref{eq:numSolve} provides an iterative procedure
for solving the wave equation. Since the time-related index $n$ is
iterated, \eqref{eq:numSolve} represents a time-domain FD method.

An Impulsive Point Source (IPS) can be approximated in the discrete domain
in the following way:
\begin{equation}
s_{i,n}=\psi_{n}\delta_{i,i_{s}}\label{eq:si}
\end{equation}
in which $\psi_{n}$ is the discrete sampling, at time index $n$,
of the continuous wavelet $\psi$ centered at the time $t_{s}$, and
$\delta_{i,i_{s}}$ represents the Kronecker delta, which is equal
to 1 only when $i=i_{s}$, whith $i_{s}$ being the spatial index
that localizes the source.

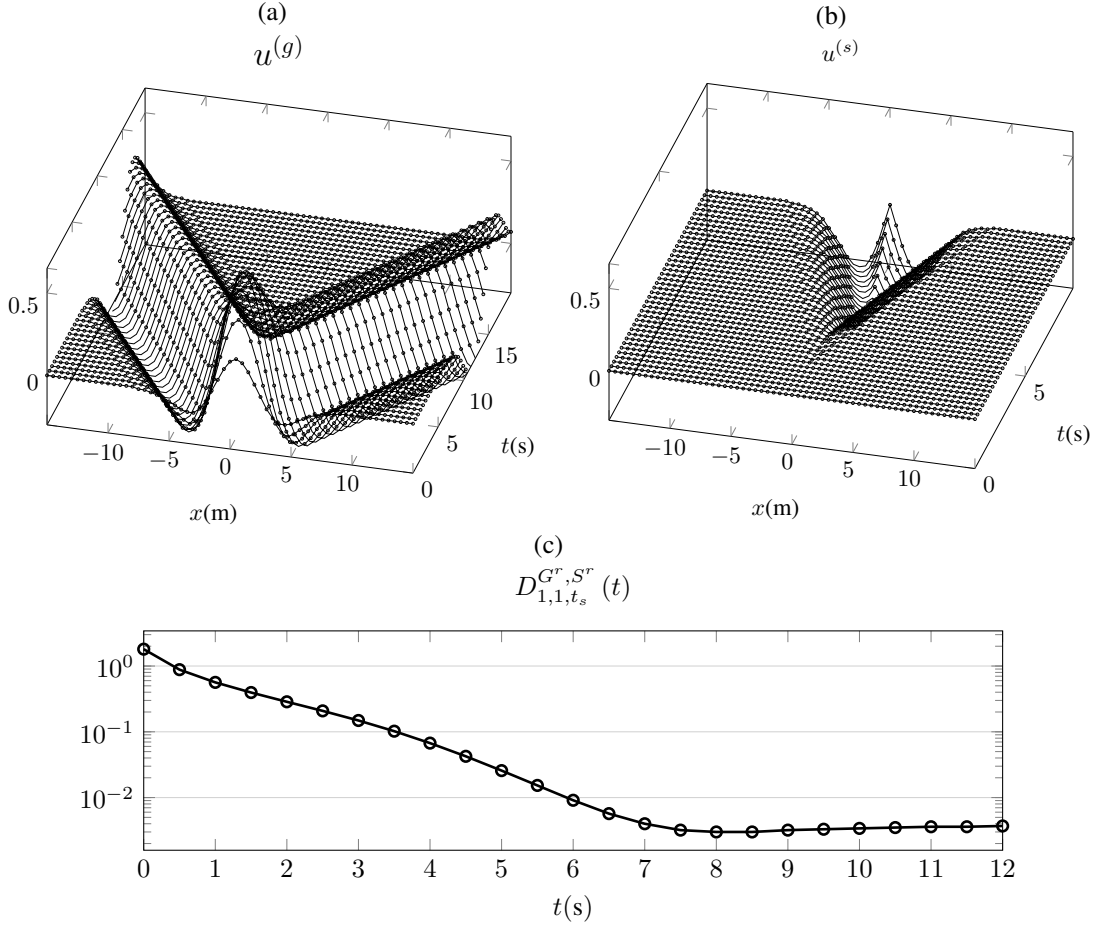
\begin{figure}[h]
\begin{centering}
\begin{minipage}[c][1\totalheight][t]{0.45\columnwidth}%
\begin{center}(a)

\begin{tikzpicture}[scale=0.9]
	\begin{axis}[
			view={15}{50},
			zmin=-0.3, zmax=0.65,
			xtick={-10,-5,0,5,10},
			ytick={0,5,10,15,20},
		    xlabel=$x$(m), 
		    ylabel=$t$(s),
			title=\Large $u^{(g)}$
		]
		\addplot3[mark=o, mark size=0.6] file {data2.csv};
	\end{axis}
\end{tikzpicture}\end{center}%
\end{minipage}%
\begin{minipage}[c][1\totalheight][t]{0.45\columnwidth}%
\begin{center}(b)

\begin{tikzpicture}[scale=0.9]
	\begin{axis}[
			view={15}{50},
			zmin=-0.3, zmax=0.65,
			xtick={-10,-5,0,5,10},
			ytick={0,5,10,15,20},
		    xlabel=$x$(m), 
		    ylabel=$t$(s),
			title=$\Large u^{(s)}$
		]
		\addplot3[mark=o, mark size=0.6] file {data3.csv};
	\end{axis}
\end{tikzpicture}\end{center}%
\end{minipage}

\begin{minipage}[c][1\totalheight][t]{0.9\columnwidth}%
\begin{center}(c)

\begin{tikzpicture}[scale=1.0]
\begin{semilogyaxis}[
	width=13cm,
	height=4.5cm,
	xmin=0.0, xmax=12.0,
	xlabel=$t(\textrm{s})$,
	title={$D^{G^{r},S^{r}}_{1,1,\sourcetime}\left(t\right)$},
	ymajorgrids,
	every major grid/.style={opacity=0.7},
	%xtick={1,2,3,4,5,6,7,8,9,10,11,12},
	% ytick={1e1,1e0,1e-1,1e-2,1e-3,1e-4,1e-5,1e-6,1e-7,1e-8,1e-9,1e-10,1e-11},
    % log ticks with fixed point,
	legend style={
		cells={anchor=east},
		legend pos=south west,
	}
]
\addplot[mark=o,mark options={scale=1,solid}, line width=1.0pt] 
table [x=a, y=b, col sep=comma] {data4.csv};
\end{semilogyaxis}
\end{tikzpicture}\end{center}%
\end{minipage}
\par\end{centering}
\caption{Numerical simulation of wave propagation by solving the wave equation
with the finite difference method. (a) Setting the initial time derivative
condition, without a source. (b) Setting an impulsive point source (IPS).
The parameters are: $\sourceposition=0,\,\sourcetime=10\,\mbox{s},\,c=1\,\mbox{m/s},\,\omega_{p}=1\,\mbox{rad/s},\,\Delta x=0.5\,\mbox{m},\,\Delta t=0.5\,\mbox{s}.$
\label{fig:GnumData}}
\end{figure}

The time derivative in the initial conditions \eqref{eq:initialconditions}
can be embedded in a simple way by taking the following approximation:
\begin{equation}
\left.\partial_{t}u\left(x_{i},t_{n}\right)\right|_{n=0}=g_{i}\approx\cfrac{u_{i,1}-u_{i,-1}}{2\Delta t}\label{eq:derivativeAproximation}
\end{equation}

Inserting \eqref{eq:derivativeAproximation} into \eqref{eq:numSolve},
with $n=0$, and setting $s_{i,0}=0$ at every point of the grid (that
is, there is no source acting at the initial time), one gets

\begin{equation}
u_{i,1}=u_{i,0}+\Delta t\,g_{i}+\cfrac{1}{2}\,C^{2}\left(u_{i+1,0}^ {}-2u_{i,0}^ {}+u_{i-1,0}^ {}\right)\label{eq:numFirst}
\end{equation}
which corresponds to the first iteration of time-domain FD scheme.
If we consider $u_{i}^{0}=0$ at every point of the grid (that is,
no initial wavefield), the expression is reduced to:
\begin{equation}
u_{i,1}^{(g)}=\Delta t\,g_{i}\label{eq:numFirstgi}
\end{equation}
where the superscript $(g)$ makes explicit that the energy is available
from the beginning of the simulation via the time derivative initial
condition. Seeing that the computational domain is limited, one has
to design artificial boundaries satisfying the chosen boundary conditions.
We will not discuss here the artificial boundaries methods, since
they are irrelevant to the main point of this work if the spatial
domain is large enough.

The present work states that one can substitute a wavelet-based IPS
$s_{i,n}$ by setting $g_{i}$ in \eqref{eq:numFirst} as the sampled
wavelet:
\begin{equation}
g_{i}=\cfrac{1}{c}\,\psi_{\left(i-i_{s}\right)/c}\label{eq:gi}
\end{equation}
and removing the $s$ term in \eqref{eq:numSolve}
\begin{equation}
u_{i,n+1}^{(g)}=-u_{i,n-1}^{(g)}+2u_{i,n}^{(g)}+C^{2}\left(u_{i+1,n}^{(g)}-2u_{i,n}^{(g)}+u_{i-1,n}^{(g)}\right)\label{eq:numSolveHomogeneous}
\end{equation}

Figure \ref{fig:GnumData} presents the results of two simulations
by using the described methodologies based on $s$ and $g$, with
$c=1\,\mbox{m/s}$ and $\peakfreq=1\,\mbox{rad/s}$. Although the
space-time domain is larger, the plot shows a set of points corresponding
to $N_{x}=61$ discrete positions, varying from $x=-15\,\mbox{m}$
to $x=15\,\mbox{m}$ ($\Delta x=0.5\,\mbox{m}$), and $N_{t}=41$
discrete times, varying from $t_{0}=0\,\mbox{s}$ to $t=20\,\mbox{s}$
($\Delta t=0.5\,\mbox{s}$). With this set of parameters, the Courant
Number $C$ equals 1, satisfying the CFL condition \eqref{eq:cflcond}
in the limit. Figure \ref{fig:GnumData}-a shows the evolution of
the wavefield when one sets $g_{i}$ in the first iteration \eqref{eq:numFirstgi}
as the sampled wavelet \eqref{eq:gi}, and use \eqref{eq:numSolveHomogeneous}
as the FD scheme. Figure \ref{fig:GnumData}-b shows the evolution
of the wavefield when both $u_{i,0}$ and $u_{i,1}$ equal 0 at every
point of the grid, and one uses \eqref{eq:numSolve} as the FD scheme,
with $s_{i}$ being defined by \eqref{eq:si}. The source is located
at $\sourceposition=0$ and reaches its maximum energy transfer at
$\sourcetime=10\,\mbox{s}$.

To compare these two results, we implement a discrete version of the
distance operator \eqref{eq:distanceMeasure}:
\begin{equation}
D_{n}^{G,S}=\cfrac{\sum_{i=1}^{N_{x}}\left|u_{i,n}^{(g)}-u_{i,n+n_{s}}^{(s)}\right|}{\sum_{i=1}^{N_{x}}\left|u_{i,n}^{(g)}\right|}\label{eq:discreteDistanceOperator}
\end{equation}
where $n_{s}$ is the time index associated with $\sourcetime$. Since
$\sourcetime=10\,\mbox{s}$ and $\Delta t=0.5\,\mbox{s}$, we have
for this case $n_{s}=20$. Figure \ref{fig:GnumData}-c shows the
evolution of $D_{n}^{G,S}$, demonstrating that the waveform $u^{(g)}$
comes to have approximately the same shape of $u^{(s)}$ after about
8 seconds, despite being out of phase, as indicated by the distance
measure below $10^{-2}$ ($<1\%$). This is not so persuasive as the
theoretical predicition expressed in the figure \ref{fig:GmSintegral},
what we credit to the imperfections of the discretizations \eqref{eq:numDeriv}
and \eqref{eq:derivativeAproximation}, but it is still a good clue
that the IPS can be numerically emulated by setting a suitable initial
time derivative condition, and so slightly reducing the computational
cost of the modelling process.

\section{Conclusions\label{sec:Conclusions}}

In time-domain modeling of the wave equation for seismic applications,
the wave propagation is usually triggered by setting a wavelet-based
source term. We have shown that, by using a scaled wavelet as the
initial time derivative condition of the one-dimensional homogeneous
wave equation, one can generate the same waveforms that would be generated
by an impulsive point source. So, in these specific conditions, the inhomogeneous
version of the wave equation can be avoided by suppressing its source
term. This was corroborated by numerical results performed in a homogeneous
media model.

The use of this initial condition technique has some limitations.
It produces wavefronts with a time lag with respect to the ones generated
by the IPS's, if we consider the time required for the source operation.
Furthermore, it instantly affects all the spatial range covered by
the wavelet, while the IPS directly operates only on a point.
In non-homogeneous media, where the velocity of wave propagation varies
with the position, this can lead to the formation of slightly different
waveforms between the two methodologies.

We believe that this study can be expanded to the 2D and 3D cases, what would make it suitable for practical applications such as wave propagation modeling for seismic imaging. However, one has to note that the wavelet that sets up the initial time derivative condition would have the same dimensions as the problem, while the wavelet that sets up the source term would continue to be a one-dimensional function. For this expansion to be made analytically, one would have to deal with the Kirchoff's formula for the solution of the wave equation in $\mathbb{R}^{2}$ and $\mathbb{R}^{3}$.

\section{Acknowledgments}

The author wish to dedicate this work to the memory of prof. Liacir dos Santos Lucena, and gratefully acknowledge the support of the Universidade Federal do Rio Grande do Norte (UFRN) and Universidade Federal Rural do Semi-Árido (UFERSA).

\appendix

\section{Obtaining the $S^{\psi}$ solution\label{sec:derivS}}

Here, we aim to obtain the $S^{\psi}$ solution when one
uses an Impulsive Point Source (IPS) as described by the definition \eqref{eq:single-pulse-source}. From the expression \eqref{eq:Sinit}, since $\psi(t'-t_{0})$ does not depend on
$x'$:
\[
S_{c,\sourceposition,\sourcetime}^{\psi}\left(x,t\right)=\cfrac{1}{2c}\int_{0}^{t}\psi\left(t'-\sourcetime\right)\int_{x-c\left(t-t'\right)}^{x+c\left(t-t'\right)}\delta\left(x'-\sourceposition\right)dx'dt'
\]

Let us make the transformation $r=c\left(t-t'\right)$, so that $t'=\left(r+ct\right)/c=r/c+t$,
$dt'=-dr/c$, and the interval of integration $0\rightarrow t$ becomes
$ct\rightarrow0$:
\[
S_{c,\sourceposition,\sourcetime}^{\psi}\left(x,t\right)=-\cfrac{1}{2c^{2}}\int_{ct}^{0}\psi\left(r/c+t-\sourcetime\right)\int_{x-r}^{x+r}\delta\left(x'-\sourceposition\right)dx'dr
\]

Let $I_{\sourceposition}(x,r)$ be the result of the integral in $x'$,
which, taking into account that $r\geq0$, can be defined by the boxcar
function:
\[
I_{\sourceposition}(x,r)=\int_{x-r}^{x+r}\delta\left(x'-\sourceposition\right)dx'=\Pi_{x-r,\,x+r}\left(\sourceposition\right)
\]
which, in turn, can be written in terms of the Heaviside Step Function:
\[
I_{\sourceposition}(x,r)=H_{x-r}\left(\sourceposition\right)-H_{x+r}\left(\sourceposition\right)
\]
where $H_{a}\left(x\right)=H\left(x-a\right)$. Rewriting it in function
of $r$:
\[
I_{\sourceposition}(x,r)=H_{x-\sourceposition}\left(r\right)-H_{x-\sourceposition}\left(-r\right)
\]

One can verify that:

%For $a\geq0$, since $a=\left|a\right|$:
%\begin{eqnarray*}
%H_{a}\left(x\right)-H_{a}\left(-x\right) & = & H_{\left|a\right|}\left(x\right)-H_{\left|a\right|}\left(-x\right)
%\end{eqnarray*}
%For $a<0$, $a=-\left|a\right|$:
%\begin{eqnarray*}
%H_{a}\left(x\right) & = & H_{-\left|a\right|}\left(x\right)\;=\;1-%H_{\left|a\right|}\left(-x\right)\\
%H_{a}\left(-x\right) & = & H_{-\left|a\right|}\left(-x\right)\;=\;1-%H_{\left|a\right|}\left(x\right)\\
%\Rightarrow H_{a}\left(x\right)-H_{a}\left(-x\right) & = & 1-H_{\left|a\right|}\left(-x\right)-\left[1-H_{\left|a\right|}\left(x\right)\right]\\
% & = & H_{\left|a\right|}\left(x\right)-H_{\left|a\right|}\left(-x\right)
%\end{eqnarray*}

\begin{equation}
H_{x-\sourceposition}\left(r\right)-H_{x-\sourceposition}\left(-r\right)=H_{\left|x-\sourceposition\right|}\left(r\right)-H_{\left|x-\sourceposition\right|}\left(-r\right)\label{eq:heaviside_relation}
\end{equation}

The non-null values of the last term in \eqref{eq:heaviside_relation}
are out of the interval $0\leq r\leq ct$, and, therefore, we can
consider:
\[
I_{\sourceposition}(x,r)=H_{\left|x-\sourceposition\right|}\left(r\right)
\]
which is equivalent to the following condition:
\begin{equation}
I_{\sourceposition}(x,r)=\begin{cases}
1, & \mbox{if }r>\left|x-\sourceposition\right|\\
1/2 & \mbox{if }r=\left|x-\sourceposition\right|\\
0, & \mbox{otherwise}
\end{cases}\label{eq:ucondition}
\end{equation}

Coming back to the variable $t$':
\begin{eqnarray*}
I_{\sourceposition}(x,r) & = & H_{\left|x-\sourceposition\right|}\left[c\left(t-t'\right)\right]\\
 & = & H\left[c\left(t-t'-\frac{\left|x-\sourceposition\right|}{c}\right)\right]
\end{eqnarray*}

By using the identity $H\left(ax\right)=H\left(x\right)H\left(a\right)+H\left(-x\right)H\left(-a\right)$,
we obtain:
\[
I_{\sourceposition}(x,r)=H\left(c\right)H\left(t-t'-\cfrac{\left|x-\sourceposition\right|}{c}\right)+H\left(-c\right)H\left(-t+t'+\cfrac{\left|x-\sourceposition\right|}{c}\right)
\]

But, considering that $c\geq0$, we have $H\left(c\right)=1$ and
$H\left(-c\right)=0$:
\begin{eqnarray*}
I\left(r\right) & = & H\left(t-t'-\cfrac{\left|x-\sourceposition\right|}{c}\right)\\
 & = & H_{-\left(t-\left|x-\sourceposition\right|/c\right)}\left(-t'\right)
\end{eqnarray*}

Thus, we arrive at:
\begin{eqnarray*}
S_{c,\sourceposition,\sourcetime}^{\psi}\left(x,t\right) & = & -\cfrac{1}{2c^{2}}\int_{ct}^{0}\psi\left(r/c+t-\sourcetime\right)I_{\sourceposition}(x,r)dr\\
 & = & \cfrac{1}{2c}\int_{0}^{t}\psi\left(t'-\sourcetime\right)H_{-\left(t-\left|x-\sourceposition\right|/c\right)}\left(-t'\right)dt'
\end{eqnarray*}

The function $H_{-a}\left(-x\right)$, for $a\geq0$, is non-null
only for $x\leq a$, and, therefore, the interval of integration becomes:
\[
S_{c,\sourceposition,\sourcetime}^{\psi}\left(x,t\right)=\cfrac{1}{2c}\int_{0}^{t-\left|x-\sourceposition\right|/c}\psi\left(t'-\sourcetime\right)dt'
\]
\textcolor{blue}{}with $\varphi\left(x\right)$ being the antiderivative
of $\psi\left(x\right)$.\textcolor{blue}{{} }By applying the fundamental
theorem of calculus, we then get the expression \eqref{eq:Ssolution}.\textcolor{red}{}

\section{Obtaining the $G^{\psi}$ solution\label{sec:derivG}}

From \eqref{eq:Gdefinition}:
\[
G_{c,\sourceposition}^{\psi}\left(x,t\right)=\cfrac{1}{2c^{2}}\int_{x-ct}^{x+ct}\psi\left(\cfrac{x'-\sourceposition}{c}\right)dx'
\]

Let $\eta=\cfrac{x'-\sourceposition}{c}$, so that $dx'=c\,d\eta$:
\[
G_{c,\sourceposition}^{\psi}\left(x,t\right)=\cfrac{1}{2c}\int_{\left(x-\sourceposition\right)/c-t}^{\left(x-\sourceposition\right)/c+t}\psi\left(\eta\right)d\eta
\]

Since $\varphi$ is the antiderivative of $\psi$, by the fundamental
theorem of calculus we get solution \eqref{eq:Gsolution}:
\[
G_{c,\sourceposition}^{\psi}\left(x,t\right)=\cfrac{1}{2c}\left[\varphi\left(\cfrac{x-\sourceposition}{c}+t\right)-\varphi\left(\cfrac{x-\sourceposition}{c}-t\right)\right]
\]

\section{Calculus of $\stdgauss$\label{sec:CalculusStdGauss}}

In this appendix, we derive the standard deviation $\stdgauss$ of
the gaussian function $f_{g}$ that corresponds to the antiderivative
of $\phirickeromega$. Since both $f_{g}$ and $\phirickeromega$
have zero mean (no constants are added):
\[
f_{g}\left(t\right)=\int\phirickeromega\left(t\right)dt=\int t\,e^{-\omega_{p}^{2}t^{2}/4}dt=-\cfrac{2e^{-\peakfreq^{2}t^{2}/4}}{\peakfreq^{2}}
\]

By comparing this with the general form of the gaussian function centered
at zero $ae^{-x^{2}/2\sigma^{2}}$, where $a$ is an amplitude factor
and $\sigma$ is its standard deviation, we arrive at the conclusion
that 
\[
\stdgauss=\sqrt{2}/\peakfreq
\]

\section{Limit of $E^{S^{r}}$\label{sec:LimitETSr}}

We want to prove the \eqref{eq:ETSrLimit} relation, correponding
to the maximum value of the total energy of the $S^{r}$ solution,
which approaches its limit as $t$ tends to infinity.

By expanding \eqref{eq:ETSr}, we get
\[
E_{c,\peakfreq}^{S^{r}}\left(t\right)=\cfrac{3}{8\,c\,\peakfreq}\sqrt{\frac{\pi}{2}}\text{erf}\left(\frac{\peakfreq t}{\sqrt{2}}\right)+\frac{3}{8\,c\,\peakfreq}\sqrt{\frac{\pi}{2}}+\frac{te^{-\frac{1}{2}\peakfreq^{2}t^{2}}}{8\,c}-\frac{\peakfreq^{2}t^{3}e^{-\frac{1}{2}\peakfreq^{2}t^{2}}}{8\,c}
\]

Since the gaussian function $e^{-\frac{1}{2}\peakfreq^{2}t^{2}}$
is rapidly decreasing, the last two terms tend to zero as $t\rightarrow\infty$.
The erf function tends to 1, so:
\[
\lim_{t\rightarrow\infty}E_{c,\peakfreq}^{S^{r}}\left(t\right)=\frac{3}{8\,c\,\peakfreq}\sqrt{\frac{\pi}{2}}+\frac{3}{8\,c\,\peakfreq}\sqrt{\frac{\pi}{2}}=\frac{3}{4\,c\,\peakfreq}\sqrt{\frac{\pi}{2}}
\]
validating \eqref{eq:ETSrLimit}.

\nocite{*}
\bibliographystyle{unsrtnat}

\bibliography{references}% Produces the bibliography via BibTeX.

\end{document}